\newcommand {\sequent}{\rhd}
\newcommand {\closure}{\mathsf{Cn}}
  \tikzstyle{vec}=[circle,inner sep=1pt,outer sep=-1pt,fill]
  \tikzstyle{border}=[thick]
  \tikzstyle{favborder}=[border,dotted]
  \tikzstyle{exclborder}=[border,dashed]
  \tikzstyle{vec}=[circle,inner sep=1pt,outer sep=-1pt,fill]
  \tikzstyle{border}=[thick]
  \tikzstyle{favborder}=[border,dotted]
  \tikzstyle{exclborder}=[border,dashed]
\newenvironment{proofof}[1]{\begin{trivlist}\item[\hskip\labelsep{\textsc{Proof
  of {#1}.\ }}]}{\hspace*{\fill} {$\square$}\end{trivlist}}
\newenvironment{tbs}{%
   \small\tt
   \begin{enumerate}[$\blacktriangleright$]}{\end{enumerate}}
\newcommand{\btbs}{\begin{tbs}}                                                                      
\newcommand{\etbs}{\end{tbs}}
\newcommand{\Reals}{\mathbb{R}}
\DeclareMathOperator{\posi}{posi}
  \DeclareMathOperator{\cl}{cl}
\newcommand{\pspace}{\Omega}
\newcommand{\reals}{\Reals}
\newcommand{\gambles}{\mathcal{L}}
\newcommand{\theory}{\mathfrak{T}}
\newcommand{\btheory}{\mathfrak{T}^\star}
\newcommand{\domain}{\mathcal{K}}
\newcommand{\assess}{\mathcal{G}}
\newcommand{\bdomain}{\mathcal{C}}
\newcommand{\nonnegative}{\gambles^{\geq} }
\newcommand{\bnonnegative}{\Sigma^{\geq}}
\newcommand{\negative}{\gambles^{<} }
\newcommand{\bnegative}{\Sigma^{<} }
\newtheorem{assumption}{Assumption}
\title[Bernstein's socks and P-coherence]{Bernstein's socks and polynomial-time provable coherence}
\author{
   \Name{Alessio Benavoli}\Email{alessio.benavoli@ul.ie}\\
     \addr   Department of Computer Science and Information Systems (CSIS), University of Limerick, Ireland.\\
  \Name{Alessandro Facchini}\Email{alessandro.facchini@idsia.ch }\\
  \Name{Marco Zaffalon}\Email{zaffalon@idsia.ch}\\
  \addr  Dalle Molle Institute for Artificial Intelligence (IDSIA), Galleria 2, 6928 Manno Switzerland
}
\begin{document}

\maketitle

\begin{abstract}
We recently introduced a bounded rationality approach for the theory of desirable gambles. It is based on the unique requirement that being nonnegative for a gamble has to be defined so that it can be provable in polynomial-time. In this paper we continue to investigate  properties of this class of models. In particular we verify that the space of Bernstein polynomials in which nonnegativity is specified by the Krivine-Vasilescu  certificate is yet another instance of this theory. As a consequence, we show how it is possible to construct in it a thought experiment uncovering entanglement with classical (hence non quantum) coins.
\end{abstract}

\section{Introduction}
In a recent paper \citep{BenavoliCC}, we have shown that Quantum Theory (QT) is a theory of bounded rationality \citep{simon1957models} based on a different notion of nonnegativity. This is tantamount to change the class of gambles that should always be desired in such a way that the consistency problem becomes \textit{provable in polynomial-time} (we have called it \textit{P-coherence}).
Conversely, in the same settings, classical probability (standard ``almost desirability'' \citep{walley91}) is  NP-hard.
As a consequence, we have thence proved that the only physics' axiom in QT is computational tractability, yielding all its weirdness (different logic of events, negative probabilities, and entanglement). 

Interestingly, it turns out that entanglement is not peculiar to QT but an inherent characteristic of  bounded rationality for desirable gambles based on P-coherence, a model first introduced in \cite{Benavoli2017b} and implemented using sum-of-squares
polynomials in the real numbers.

A first goal of the present paper is to better understand the structural properties of P-coherence.  An elegant way of doing this is to look at the linear space of gambles ($\gambles$) as an algebra of formulas, and thus define a logic on it. In doing so, we verify that a sufficient condition for the reduction of  P-coherence to classical logical consistency (that is  the existence of a non derivable formula from the considered set of assessments) is for the set of tautologies to satisfy a certain ``pullup'' property.

As a second goal of our work, we provide yet another instance of a P-coherence model by using the so-called Krivine-Vasilescu nonnegativity certificate for polynomials. 
We show that, by focusing on polynomials defined on the \textit{simplex of probability},  this notion of nonnegativity defines the so-called \textit{Bernstein nonnegative polynomials}, a class whose cone, when polynomials are of any degree $d$,  has been introduced in \cite{de2009exchangeable}
to generalise de Finetti's representation result for exchangeable events (see also \citep{de2012exchangeability,de2012imprecise,DeCooman2015JAIR}).

Finally, we restrict our attention to  a finite degree $d$, compute the dual of a (Bernstein) P-coherent set of desirable gambles and illustrate how P-coherence can provide an example of \textit{entanglement with classical coins}: two coins that always land HH or TT, but for which it is not possible to find a ``common cause'' \citep{einstein1935can} (a classical correlation model) that explains
these results. The latter is done by deriving a Bell's type inequality in the Bernstein's world.

The title of  this paper is freely inspired by Bell's work ``Bertlmann's socks and the nature of reality'' \citep{bell1981bertlmann}.

\section{The logic of desirability}\label{sec:tdg}

Let $\pspace$ denote the possibility space of an experiment (e.g., $\{Head, Tail\}$ 
or $\reals^n$). A gamble $g$ on $\pspace$ is a bounded real-valued function of $\pspace$, interpreted as
an uncertain reward. 
Accepting a gamble $g$ by an agent, Alice, is regarded as a
commitment to receive, or pay (depending on the sign), $g(\omega)$ \emph{utiles}\footnote{Abstract units of utility, we can approximately identify it with money provided we deal with small amounts of it \cite[Sec.~3.2.5]{finetti1974}} whenever $\omega \in \Omega$ occurs. We denote  by $\gambles$ the set of all the gambles on $\pspace$, the subset of all nonnegative gambles, that is, of gambles for which Alice is never expected to lose utiles, is denoted as  
$\nonnegative\coloneqq \{g \in \gambles: \inf g\geq0 \}$ (analogously negative gambles are denoted as $\negative\coloneqq \{g \in \gambles: \sup g < 0 \}$).
In the following, with $\mathcal{G} \subset \gambles$ we denote a finite set of gambles that Alice finds desirable (we will comment on the case when $\mathcal{G}$ may not be finite): these are the gambles that she is willing to accept and thus commits herself to the corresponding transactions.

The crucial question is now to  provide a criterion for a set $\mathcal{G}$ of gambles representing assessments of desirability to be called \emph{rational}. Intuitively Alice is rational if she avoids sure losses (also-called Dutch books or arbitrages): that is, if, by considering the implications of what she finds desirable, she is not forced to find desirable a negative gamble. 
An elegant way to formalise this intuition is to see $\gambles$ as an algebra of formulas, and thus define a logic on it. Based on it, we can thus formulate rationality as \textit{logical consistency}. 

In the theory of (almost) desirable gambles, we may proceed as follows. First of all we introduce some basic notions from logic.


A sequent is a pair $(\assess, g)$, written $\assess \sequent g$, where $\assess$ is a set (possibly empty) of gambles, and $g$ is a gamble.  

One can read a sequent $\assess \sequent g$ as saying ``whenever the gambles in  $\assess$ are desirable (accepted) for (by) Alice, the gamble $g$ is also desirable (accepted) for (by) Alice''.

A Gentzen-style rule is a pair which consists of a set $\{\assess_i \sequent g_i \mid i \leq \alpha\}$ of sequents, called the premisses of the rule, and a sequent $\assess \sequent g$, called the consequence of the rule and therefore which follows from the set according to the rule. We let $\alpha\leq \omega$, and thus do not rule out the fact that a rule may be infinitary.
A rule $r$ is  written symbolically in the form

\begin{prooftree}
              \AxiomC{$\{\assess_i \sequent g_i \mid i \in \alpha\}$}
         \LeftLabel{\tiny (r) }
         \RightLabel{,}
              \UnaryInfC{$\assess \sequent g$}
     \end{prooftree}
\noindent which can be read as ``if  Alice accepts the gamble  $g_i$, given the fact that she accepts the gambles  in  $\assess_i$ (for $i \in \alpha$), then necessarily whenever the gambles in $\assess$ are accepted, the gamble $g$ is also accepted''.

An axiom is a rule in which the set of premisses is empty.

A system $\mathfrak{S}$ is a set of Gentzen-style rules. 
We say that a sequent $\assess \sequent g$ is provable in a system $\mathfrak{S}$ from a set of sequent $\{\assess_i \sequent g_i \mid i \leq \alpha\}$ if there is a well-founded tree whose leaves are labelled either with axioms or with members of $\{\assess_i \sequent g_i \mid i \leq \alpha\}$, whose root is labelled with $\assess \sequent g$ and the labelling of all nodes is consistent with the rules of $\mathfrak{S}$.   A sequent is provable $\mathfrak{S}$ if it is provable from the empty set.
 The set of all gambles $g$ such that $\assess \sequent g$, for some set $\assess$, 
 will sometimes be denoted by $\closure_\mathfrak{S} (\assess)$. Finally, we say that a set $\assess$ is \emph{consistent in  $\mathfrak{S}$} if there is a gamble $g$ such that $\assess \sequent g$ is not provable in $\mathfrak{S}$, that is 
\begin{equation}
\label{eq:consistent}
\exists g\in \gambles ~\text{ such that }~ g\notin \closure_\mathfrak{S} (\assess).
\end{equation}
A set $\assess$ is closed in $\mathfrak{S}$ whenever $\closure_\mathfrak{S} (\assess)= \assess$, and is called a theory of $\mathfrak{S}$. In general, theories are denoted by $\domain$. 
When $\closure_\mathfrak{S}$ is a consequence operator, that is it is reflexive, monotone and transitive, the consistent theories of $\mathfrak{S}$ completely characterised it, in the sense that $\closure_\mathfrak{S} (\assess) $ coincides with the intersection of all consistent theories of $\mathfrak{S}$ extending $\assess$.

The system $\theory$ for the theory of (almost) desirable gambles is thus defined as follows:

{\small
\vspace{0.3cm}
\setlength{\tabcolsep}{0pt}
\begin{tabular}{ll}
  \multicolumn{2}{l}{
  \textbf{Structural axiom:}}\vspace{3mm}\\
  \textit{Reflexivity} 
  \\
              & {\AxiomC{}
         \LeftLabel{\tiny (R) }
         \RightLabel{, for  $g \in \assess$}
              \UnaryInfC{$\assess  \sequent g$}
              \DisplayProof}\vspace{2mm}\\
  \multicolumn{1}{l}{\textbf{Logic axiom:}}\vspace{3mm}\\
\textit{Accept. nonneg.} 
\\
& {              \AxiomC{}
         \LeftLabel{\tiny (ANN) }
         \RightLabel{, for  $g \in \nonnegative$}
              \UnaryInfC{$\assess \sequent g$}
              \DisplayProof}\vspace{2mm}\\
                \multicolumn{1}{l}{\textbf{Logic rules:}}\vspace{3mm}\\              
\textit{Conic hull} 
&{     \AxiomC{$\assess \sequent g$}
     \AxiomC{$\assess' \sequent f$}
  \LeftLabel{\tiny (L) }
   \RightLabel{
  , for  $\mu, \lambda \geq 0$
   }
     \BinaryInfC{$\assess, \assess' \sequent \mu g+ \lambda f$}
                  \DisplayProof}\vspace{2mm}\\
                  \textit{Closure} 
                  \\
&{       \AxiomC{$\{\assess \sequent g + \epsilon^\ell \mid \ell > 0 \} $}
  \LeftLabel{\tiny (C) }
  \RightLabel{, for $\epsilon \in (0,1)$}
    \UnaryInfC{$\assess \sequent g$}
                  \DisplayProof}\vspace{3mm}\\
\end{tabular}
 }

It is easy to verify that $\closure_\theory$ is indeed a consequence operator on $\gambles$, and thus in particular that the usual structural rules of Gentzen systems such as \textit{weakening} and \textit{cut} are derivable in $\theory$.
It is immediate to verify that logical consistency in $\theory$ is indeed tantamount to rationality (coherence, no arbitrage):

\begin{theorem}\label{prop:cons}
 Let  $\assess$ be a set of assessments. It holds that  $\closure_\theory (\assess) = \cl(\posi(\assess))$, where $\posi$ denotes the conic hull operators and $\cl$ the topological closure operator.
Moreover the following conditions are equivalent
 \begin{enumerate}
 \item $\assess$ is logically consistent in $\theory$ 
 \item $\assess$ avoids negativity (sure loss), that is $\negative \cap \closure_\theory(\assess) = \emptyset$ 
 \item $-1 \notin  \closure_\theory (\assess)$.
 \end{enumerate}
\end{theorem}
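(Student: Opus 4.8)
The plan is to split the argument into two parts: first establish the explicit description $\closure_\theory(\assess) = \cl(\posi(\assess))$, and then derive the equivalence of the three consistency conditions as easy corollaries of this description. For the first part I would argue by mutual inclusion. The inclusion $\cl(\posi(\assess)) \subseteq \closure_\theory(\assess)$ is the ``soundness''-style direction: I would show that every gamble obtainable from $\assess$ by taking nonnegative linear combinations with elements of $\assess$ and with nonnegative gambles (via \textbf{(R)}, \textbf{(ANN)} and \textbf{(L)}) lies in $\closure_\theory(\assess)$ — this handles $\posi(\assess \cup \nonnegative) = \posi(\assess) + \nonnegative$, which one checks equals $\cl(\posi(\assess))$ for $\assess$ finite using that $\posi$ of a finite set is a closed polyhedral cone — and then the \textbf{(C)} rule exactly captures passing to the topological closure, since $g + \epsilon^\ell \in \posi(\assess)+\nonnegative$ for all $\ell$ forces $g$ to be a limit of such gambles (uniform norm). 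The reverse inclusion $\closure_\theory(\assess) \subseteq \cl(\posi(\assess))$ is the ``completeness''-style direction: I would show that $\cl(\posi(\assess))$ is itself closed under all the rules of $\theory$ — it obviously contains $\assess$ and all of $\nonnegative$ (since $\nonnegative$ is a convex cone and $\posi(\assess) \supseteq \{0\}$, so $\cl(\posi(\assess)) \supseteq \nonnegative$), it is closed under conic combinations because $\cl(\posi(\cdot))$ is a convex cone, and it is closed under \textbf{(C)} because it is topologically closed — hence by minimality of $\closure_\theory(\assess)$ among rule-closed sets containing $\assess$ we get the inclusion. Here I would lean on the remark already made in the text that $\closure_\theory$ is a genuine consequence operator, so that $\closure_\theory(\assess)$ is the least theory containing $\assess$.

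For the equivalences, I would prove $(1) \Rightarrow (2) \Rightarrow (3) \Rightarrow (1)$. For $(1)\Rightarrow(2)$: if $\negative \cap \closure_\theory(\assess) \neq \emptyset$, pick $g$ with $\sup g < 0$ in the closure; then for any gamble $h$, writing $h = \lambda g + (h - \lambda g)$ with $\lambda$ large enough that $h - \lambda g \in \nonnegative$ (possible since $g$ is bounded away from $0$ and $h$ is bounded), rule \textbf{(L)} together with \textbf{(ANN)} yields $h \in \closure_\theory(\assess)$, so $\closure_\theory(\assess) = \gambles$, contradicting consistency. The implication $(2)\Rightarrow(3)$ is immediate since $-1 \in \negative$. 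For $(3)\Rightarrow(1)$: if $\assess$ is inconsistent then $\closure_\theory(\assess) = \gambles \ni -1$, so contrapositively $-1 \notin \closure_\theory(\assess)$ gives consistency.

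The main obstacle I expect is the careful verification of the identity $\cl(\posi(\assess)) = \posi(\assess \cup \nonnegative)$ and, relatedly, the exact effect of the \textbf{(C)} rule: one must check that adding $\nonnegative$ to a finitely generated cone and then taking the \textbf{(C)}-closure really does produce the topological closure and nothing more, i.e.\ that no genuinely new limit points appear beyond those reachable by adding an arbitrarily small positive constant. This is where finiteness of $\assess$ (so that $\posi(\assess)$ is polyhedral, hence $\posi(\assess) + \nonnegative$ is already quite close to closed) does the real work, and where one should be careful about the topology on $\gambles$ (supremum norm) when $\pspace$ is infinite. Everything else — that the rules preserve membership in a convex closed cone, and the three-way equivalence — is routine once this is in place.
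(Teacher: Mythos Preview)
Your overall architecture --- mutual inclusion for the first identity, then a cycle $(1)\Rightarrow(2)\Rightarrow(3)\Rightarrow(1)$ for the equivalences --- is sound, and the equivalence argument is exactly the standard one (the paper simply defers it to Walley). The paper's own proof of the first identity is a one--line proof--tree normalisation remark: any derivation in $\theory$ can be rearranged so that all instances of the infinitary rule \textbf{(C)} occur last, i.e.\ at the root, whence the set of derivable gambles is precisely the \textbf{(C)}--closure of what \textbf{(R)}, \textbf{(ANN)}, \textbf{(L)} alone produce. Your semantic ``least closed cone'' argument is a legitimate and arguably cleaner alternative.

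There is, however, a genuine gap in your first part. You claim that $\cl(\posi(\assess))\supseteq\nonnegative$ because ``$\nonnegative$ is a convex cone and $\posi(\assess)\supseteq\{0\}$'', and relatedly that $\posi(\assess)+\nonnegative=\cl(\posi(\assess))$ for finite $\assess$ by polyhedrality. Both statements are false: take $\assess=\{g\}$ for any gamble $g$ that changes sign; then $\posi(\assess)$ is a single closed ray, and its closure is still that ray, which certainly does not contain $\nonnegative$. The point is that the identity in the theorem must be read as
\[
\closure_\theory(\assess)=\cl\bigl(\posi(\assess\cup\nonnegative)\bigr),
\]
i.e.\ the conic hull is taken over $\assess$ \emph{together with} the tautologies supplied by \textbf{(ANN)} (compare the explicit $\posi(\assess\cup\bnonnegative)$ in Proposition~\ref{prop:cohe}). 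Once you write it this way, your mutual--inclusion argument goes through verbatim: $\cl(\posi(\assess\cup\nonnegative))$ is a closed convex cone containing $\assess$ and $\nonnegative$, hence closed under all rules; conversely \textbf{(R)}, \textbf{(ANN)}, \textbf{(L)} generate $\posi(\assess\cup\nonnegative)$ and then \textbf{(C)} gives its sup--norm closure by the argument you sketch (for any $\delta>0$ pick an approximant $g_n$ with $\|g-g_n\|_\infty<\delta$, so $g+\delta-g_n\in\nonnegative$). No polyhedrality is needed, and finiteness of $\assess$ plays no role here.
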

Therefore,  $-1$ can be regarded as playing the role of the (classical) \textit{Falsum}.
  
Clearly when $\mathcal{G}$ is finite, $\closure_\theory(\assess)$ simply coincides with the conic hull closure of $\assess$. From Theorem \ref{prop:cons} we thence obtain that the \textit{principle of explosion} is derivable in $\theory$:
\vspace{0.5cm}

{\small
\begin{tabular}{ll}               
 &{     \AxiomC{$\assess \sequent -1$}
  \LeftLabel{
  \tiny (Explosion) 
  }
   \RightLabel{
, for  $g \in \gambles$.
  }
     \UnaryInfC{$\assess \sequent g$}
       \DisplayProof}\vspace{8mm}\\
\end{tabular}
}

By observing that the mathematical dual of $\domain$ is a closed convex set of probabilities, we can then provide a \textit{semantic} (probabilistic interpretation) to  $\theory$:
\begin{equation}
\label{eq:dual}
\begin{aligned}
 \mathcal{P}(\assess)=\left\{\mu  \in \mathsf{S} \Big| \int_{\pspace} g(\pspace) d\mu(\pspace)\geq0, ~\forall  g\in \mathcal{G}\right\},\\
\end{aligned}
\end{equation}
where $\mathcal{S}=\{ \mu \in \mathcal{M} \mid \inf \mu \geq0,~\int_{ \pspace}  d\mu(\pspace)=1\}$  is the set of all probabilities in $\pspace$, also-called (belief) \emph{states}, and $ \mathcal{M}$ the set of all charges (a charge is a finitely additive signed-measure \cite[Ch.11]{aliprantisborder}) on $\pspace$.  
The duality  actually provides us immediately with a sound and completeness results. Indeed, say that a state  $\mu \in \mathcal{S}$ is a model of a set of gambles $\assess$ whenever $\mu \in  \mathcal{P}(\assess)$. Then the following is an immediate consequence of Theorem \ref{prop:cons} and \cite[Theorem 4]{pmlr-v62-benavoli17b}. 
\begin{theorem}\label{theo:com}
For every set of gambles $\assess \cup \{g\} \subseteq \gambles$, it holds that
\begin{equation}\label{eq:complet}
g \in \closure_{\theory}(\assess)\iff \mathcal{P}(g) \subseteq \mathcal{P}(\assess). \end{equation}
In particular $\assess$ is inconsistent iff $\mathcal{P}(\assess) = \emptyset$ (it has no probabilistic model).
\end{theorem}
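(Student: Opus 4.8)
The plan is to leverage the explicit characterisation of $\closure_\theory$ from Theorem \ref{prop:cons}, namely $\closure_\theory(\assess) = \cl(\posi(\assess))$, so that the logical statement $g \in \closure_\theory(\assess)$ becomes a purely geometric statement about a closed convex cone in $\gambles$. Once this reduction is in place, the equivalence in \eqref{eq:complet} is essentially a separating-hyperplane argument together with the identification of the continuous positive linear functionals on $\gambles$ with (scaled) charges on $\pspace$. Concretely, I would first recall that $\mathcal{P}(\assess)$ as defined in \eqref{eq:dual} is exactly the set of states whose associated expectation functional is nonnegative on every element of $\assess$, and that, because evaluation of $\int g\,d\mu$ is continuous and linear in $g$, such a functional is automatically nonnegative on all of $\posi(\assess)$ and on its closure $\cl(\posi(\assess))$. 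This already gives the easy direction: if $g \in \closure_\theory(\assess) = \cl(\posi(\assess))$, then every $\mu \in \mathcal{P}(\assess)$ satisfies $\int g\,d\mu \geq 0$, i.e. $\mu \in \mathcal{P}(g)$, hence $\mathcal{P}(\assess) \subseteq \mathcal{P}(g)$; conversely $\mathcal{P}(g) \subseteq \mathcal{P}(\assess)$ needs the other direction. (Note the inclusion in \eqref{eq:complet} is written $\mathcal{P}(g)\subseteq\mathcal{P}(\assess)$, and unravelling definitions $\mathcal{P}(g)$ is shorthand for $\mathcal{P}(\{g\})$, so the nontrivial content is: every state that makes $g$ desirable already makes all of $\assess$ desirable implies $g$ is entailed by $\assess$ — I would double-check the intended direction of the inclusion against \eqref{eq:dual}, since one expects rather $g\in\closure_\theory(\assess)\iff \mathcal{P}(\assess)\subseteq\mathcal{P}(g)$.)

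For the substantive direction I would argue by contraposition: suppose $g \notin \closure_\theory(\assess) = \cl(\posi(\assess))$. Since $\cl(\posi(\assess))$ is a closed convex cone in the (appropriately topologised) space of gambles and $\{g\}$ is compact and disjoint from it, the Hahn--Banach separation theorem yields a continuous linear functional $\Lambda$ on $\gambles$ and a scalar such that $\Lambda$ is $\geq 0$ on the cone (hence in particular on $\nonnegative$, by axiom (ANN), and on $\assess$) while $\Lambda(g) < 0$. The cited \cite[Theorem 4]{pmlr-v62-benavoli17b} is what allows me to represent $\Lambda$ as integration against a nonnegative charge $\mu_0 \in \mathcal{M}$; positivity of $\Lambda$ on $\nonnegative$ forces $\mu_0 \geq 0$, and $\mu_0$ is nonzero because $\Lambda(g)<0$, so after normalising $\mu \coloneqq \mu_0 / \mu_0(\pspace) \in \mathcal{S}$. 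This $\mu$ lies in $\mathcal{P}(\assess)$ (it is nonnegative on every gamble in $\assess$) but not in $\mathcal{P}(g)$ (since $\int g\,d\mu < 0$), witnessing $\mathcal{P}(\assess) \not\subseteq \mathcal{P}(g)$ and completing the contrapositive.

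The final clause — that $\assess$ is inconsistent iff $\mathcal{P}(\assess) = \emptyset$ — then follows by combining the equivalence just proved with condition (3) of Theorem \ref{prop:cons}: inconsistency means $-1 \in \closure_\theory(\assess)$, which by the equivalence is $\mathcal{P}(\assess) \subseteq \mathcal{P}(-1)$; but $\mathcal{P}(-1) = \{\mu \in \mathcal{S} : \int (-1)\,d\mu \geq 0\} = \{\mu : -1 \geq 0\} = \emptyset$, so this holds precisely when $\mathcal{P}(\assess) = \emptyset$. The main obstacle I anticipate is not the separation argument itself but the functional-analytic bookkeeping hidden in the appeal to \cite[Theorem 4]{pmlr-v62-benavoli17b}: one must be careful about which topology on $\gambles$ (supremum norm, giving $\gambles = \ell^\infty(\pspace)$ whose dual is the space of charges) makes $\cl(\posi(\assess))$ the right object, and must ensure the separating functional is genuinely representable as a charge rather than living in some larger dual — this is exactly the point the cited theorem is invoked to handle, so the plan is to quote it rather than reprove it, and otherwise the argument is routine.
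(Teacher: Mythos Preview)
Your approach is correct and matches the paper's own: the paper does not give a separate proof but states that the theorem ``is an immediate consequence of Theorem~\ref{prop:cons} and \cite[Theorem 4]{pmlr-v62-benavoli17b}'', and your sketch is precisely the unfolding of that claim---reduce to the closed convex cone $\cl(\posi(\assess))$, then separate via Hahn--Banach and invoke the cited representation theorem to turn the separating functional into a charge. Your parenthetical suspicion about the direction of the inclusion is well-founded: as written, $\mathcal{P}(g)\subseteq\mathcal{P}(\assess)$ is the reverse of what the argument actually establishes (and of what soundness/completeness should say), namely $\mathcal{P}(\assess)\subseteq\mathcal{P}(g)$; this appears to be a typo in the paper rather than a gap in your reasoning.
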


Hence, whenever an agent is coherent (that is the rationality of her behaviour is represented by a logical consistent theory in $\theory$), Equation~\eqref{eq:dual} states that desirability corresponds to nonnegative expectation (for all probabilities in $\mathcal{P}$). When she is incoherent,  $\mathcal{P}$ turns out to be empty.

\begin{remark}\label{remark}
In \eqref{eq:consistent} we have defined the notion of logical consistency for a theory as being non trivial (different from the whole language) and thus, from a semantic perspective, as having a model (that is $\mathcal{P}$ is not empty).
Theorem \ref{prop:cons} attests that $-1$ is tantamount to the (classical) Falsum, that is an all implying formula that has no model. 

The importance of being able to reduce incoherence to logical inconsistency can be appreciated by the following argument. Assume this is not the case,  the explosion principle for $-1$ does not hold. In particular, this means that we may be able to find two different incoherent sets  of  gambles (formulas). But then such sets cannot be separated in the dual space, meaning that duality would fail in providing us with a sound and  complete probabilistic semantics for the system  under consideration (that is satisfying the correspondence in Equation \eqref{eq:complet} of  Theorem \ref{theo:com}).

The accent in this work to the capability of reducing coherence (also when formulated as P-coherence) to logical consistency is therefore justified by the fact that we do not want a situation of  incoherence (irrationality) to represent anything else than a situation of incoherence (irrationality), and thus for which, seen as a theory of a logic, there is no ``natural'' model (in the dual space). 
\end{remark}

\section{The complexity of inference}\label{sec:comp}

In light of Theorem \ref{prop:cons}, when the theory is finitely generated, the problem of checking whether or not $\domain$ is consistent can be formulated as the following decision problem:
\begin{equation}
\label{eq:dec}
\begin{aligned}
\exists \lambda_i\geq0:-1-\sum\limits_{i=1}^{|\mathcal{G}|} \lambda_i g_i \in \gambles^{\geq}.
\end{aligned}
\end{equation}
If the answer is ``yes'', then the gamble $-1$ belongs to $\domain$, proving $\domain$'s inconsistency. Actually any inference task can ultimately be reduced to a problem of the form~\eqref{eq:dec}: the lower prevision (expectation) of a gamble $q$ is $\underline{E}(q)=\sup_{\lambda_0\in\reals,\lambda_i\in \reals^{\geq}} \lambda_0:~ q-\lambda_0-\sum_{i=1}^{|\mathcal{G}|} \lambda_i g_i \in \gambles^{\geq}$.
Hence, the above decision problem  unveils  a crucial fact: the hardness of inference in classical probability corresponds to the hardness of evaluating the nonnegativity of a function in the considered space (the ``nonnegativity decision problem'').

When $\pspace$ is infinite, and for generic functions, the nonnegativity decision problem is \textit{undecidable}. To avoid such an issue, we may impose  restrictions on the class of allowed gambles. For instance, instead of $\gambles$, we may consider $\gambles_R$: the class of multivariate polynomials of degree at most $d$ (we denote by $\gambles_R^{\geq}\subset\gambles_R$ the subset of nonnegative polynomials and by $\gambles_R^{<}\subset\gambles_R$ the negative ones).  In  doing so, by Tarski-Seidenberg quantifier elimination theory \citep{tarski1951decision,seidenberg1954new}, the decision problem becomes decidable, but still intractable, being in general NP-hard. If we  accept that P$\neq$NP and we require that inference should be tractable  (in P), we are stuck. 
What to do? A solution is to change the meaning of ``being nonnegative'' for a  function by considering a subset $\bnonnegative \subsetneq \nonnegative_R$ for which the membership problem in \eqref{eq:dec} is in P. 

In other words, a computationally efficient version of TDG, which we denote by $\btheory$, 
should be based on a redefinition of the logical axiom scheme , i.e., by stating that

\vspace{0.2cm}
{\small
\begin{tabular}{ll}               
\textit{Accept. P-nonneg.}\\ 
& {              \AxiomC{}
         \LeftLabel{\tiny (P) }
           \RightLabel{
          , for  $g \in \bnonnegative$.
           }
              \UnaryInfC{$\assess \sequent g$}
              \DisplayProof}\vspace{2mm}\\
\end{tabular}
}

We thus denote by $\btheory$ the  Gentzen system (on $\gambles_R$) obtaining from $\theory$  by substituting (P) to (ANN), and call it a \textit{P-system}.

\section{Coherence and semantics}\label{sec:con}

\subsection{P-coherence vs consistency}\label{sub:con}

However, how can be sure that we have done things properly, that $\btheory$ is really just a  computationally efficient version of $\theory$?
In order to do so we would like to verify that, for finite sets $\assess$,  logical consistency can be checked in polynomial-time by solving:
\begin{equation}
\label{eq:bdec}
\begin{aligned}
\exists \lambda_i\geq0 ~~\text{ such that }~~ -1-\sum\limits_{i=1}^{|\mathcal{G}|} \lambda_i g_i \in \bnonnegative.
\end{aligned}
\end{equation}
Note that, the lower prevision of a gamble $q$ in this case is
\begin{equation}
\label{eq:lowerprev}
\begin{aligned}
&\underline{E}_B(q)=\sup_{\lambda_0\in\reals,\lambda_i\in \reals^{\geq}} \lambda_0\\
&s.t.\\
&q-\lambda_0-\sum\limits_{i=1}^{|\mathcal{G}|} \lambda_i g_i \in \bnonnegative.
\end{aligned}
\end{equation}

To verify that logical consistency can be checked in polynomial-time, we need to find an analogous of Theorem \ref{prop:cons}, but clearly also to assume that $\gambles_R$ contains all constant gambles.

In what follows, we provide some sufficient conditions for this to hold. 
First of all, it is reasonable to ask to the new variant of ``being nonnegative''  (that is to the set $\bnonnegative$) to be a closed convex cone. 
Avoiding nonnegativity (that is coherence, rationality, no arbitrage) can now be redefined as follows
\begin{definition}[P-coherence]
\label{def:bavs}
A set $\bdomain \subseteq \gambles_R$ is \emph{P-coherent}  if  $\bnegative \cap  \bdomain=\emptyset$, where $\bnegative$ is the interior of $\{g  \mid -g \in \bnonnegative\}$.
\end{definition}
From now on, we also always make another minimal reasonable assumption of $\bnegative$ being non empty.
 The next result states essentially that $-1$ represents  P-incoherence.

  \begin{proposition}\label{prop:cohe}
Let $\assess \subseteq \gambles_R$ a set of assessments. It holds that $\cl\posi(\assess)=\closure_{\btheory}(\assess)$.
Moreover the following are equivalent: 
\begin{enumerate}
 \item $-1 \notin \posi ( \assess \cup \bnonnegative)$
    \item $\posi ( \assess \cup \bnonnegative)$ is P-coherent 
     \item $\closure_{\btheory}(\assess)$ is P-coherent
\end{enumerate}
  \end{proposition}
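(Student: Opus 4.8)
The plan is to mirror the proof of Theorem \ref{prop:cons}, exploiting the fact that the only change in passing from $\theory$ to $\btheory$ is the replacement of the logic axiom (ANN) by (P), where $\bnonnegative$ is again a closed convex cone. First I would establish the identity $\closure_{\btheory}(\assess) = \cl\posi(\assess \cup \bnonnegative)$ by the usual two inclusions: soundness ($\supseteq$) follows because each derivation rule of $\btheory$ preserves membership in $\cl\posi(\assess\cup\bnonnegative)$ — reflexivity (R) gives elements of $\assess$, the axiom (P) gives elements of $\bnonnegative$, the conic-hull rule (L) is exactly closure under nonnegative linear combinations, and the closure rule (C) is exactly topological closure (here one uses that $\bnonnegative$ contains the constants $\epsilon^\ell$, or at least the positive constants, so that $g+\epsilon^\ell$ derivable for all $\ell$ yields $g$). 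Completeness ($\subseteq$) follows because any element of $\cl\posi(\assess\cup\bnonnegative)$ can be reached: a finite nonnegative combination of members of $\assess$ and of $\bnonnegative$ is derivable via (R), (P) and iterated (L), and then (C) recovers limit points. Since $\bnonnegative$ is itself a closed convex cone and we are taking $\posi$ of its union with $\assess$, this simplifies to $\closure_{\btheory}(\assess) = \cl\posi(\assess\cup\bnonnegative)$; when $\assess$ is finite the closure is automatic, but I would keep the $\cl$ to cover the general statement as written, noting $\cl\posi(\assess)$ in the proposition is shorthand for $\cl\posi(\assess\cup\bnonnegative)$ since (P) injects $\bnonnegative$.

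Next I would prove the cycle of equivalences (1) $\Rightarrow$ (2) $\Rightarrow$ (3) $\Rightarrow$ (1). For (1) $\Rightarrow$ (2): suppose $-1 \notin \posi(\assess\cup\bnonnegative)$; I must show $\posi(\assess\cup\bnonnegative) \cap \bnegative = \emptyset$. The key lemma here is that $\bnegative$ (the interior of $-\bnonnegative$) is "absorbing toward $-1$": if $g \in \bnegative$ then, scaling, some positive multiple $\mu g$ satisfies $-1 - \mu g \in \bnonnegative$ — this is where the assumption that $\bnegative$ is nonempty and that $\gambles_R$ contains the constants is used, together with the fact that for $g$ in the interior of $-\bnonnegative$ one can write $-1$ as $\mu g$ plus an element of $-(-\bnonnegative) $... more carefully, $-1 \in \mu g + \bnonnegative$ for suitable $\mu>0$ because the ray through an interior point of the cone $-\bnonnegative$, suitably translated, hits $-1$. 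Hence if some $g \in \posi(\assess\cup\bnonnegative) \cap \bnegative$, then $-1 = \mu g + h$ with $h \in \bnonnegative$, so $-1 \in \posi(\assess\cup\bnonnegative)$, contradiction. The implication (2) $\Rightarrow$ (3) is essentially definitional once we know $\closure_{\btheory}(\assess) = \cl\posi(\assess\cup\bnonnegative) = \posi(\assess\cup\bnonnegative)$ when the latter is closed (finite case), or more generally one checks $\bnegative \cap \cl\posi(\assess\cup\bnonnegative) = \emptyset$ follows from $\bnegative \cap \posi(\assess\cup\bnonnegative)=\emptyset$ because $\bnegative$ is open and disjoint from a convex set implies disjoint from its closure — this needs the separating-hyperplane argument, i.e. an open convex set disjoint from a convex set is disjoint from that set's closure. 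Finally (3) $\Rightarrow$ (1): if $-1 \in \posi(\assess\cup\bnonnegative) \subseteq \closure_{\btheory}(\assess)$, then since $-1 \in \bnegative$ (as $-1$ lies in the interior of $-\bnonnegative$, using that $\bnonnegative$ is a cone containing a neighbourhood structure around $1$ — again the nonemptiness of $\bnegative$ plus $1 \in \bnonnegative$), we get $-1 \in \bnegative \cap \closure_{\btheory}(\assess)$, contradicting P-coherence.

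The main obstacle I expect is the topological bookkeeping around $\bnegative$: specifically, showing that $-1$ itself belongs to $\bnegative$ and that an interior point of $-\bnonnegative$ can always be "scaled and completed" to reach $-1$. This is exactly the place where the minimal assumptions stated in the text ($\bnonnegative$ a closed convex cone, $\gambles_R \supseteq$ constants, $\bnegative \neq \emptyset$) are doing real work, and it is the analogue of the step in Theorem \ref{prop:cons} identifying $-1$ with the Falsum. I would isolate this as a small lemma: \emph{under the standing assumptions, $-1 \in \bnegative$, and for every $g \in \bnegative$ there is $\mu > 0$ with $-1 - \mu g \in \bnonnegative$.} Granting that lemma, the rest is a routine replay of the convex-cone / separating-hyperplane arguments already used for $\theory$, with $\bnonnegative$ in place of $\nonnegative$; the completeness half of the first identity and the equivalence chain then go through with only cosmetic changes.
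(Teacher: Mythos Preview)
Your proposal is correct and follows essentially the same approach as the paper: the first identity is established by the same soundness/completeness reasoning (the paper phrases it as ``lifting the closure rule to the root''), and the equivalence of (1)--(3) rests on exactly the two facts you isolate, namely that $\bnegative$ is open and that an interior point $-f$ of $\bnonnegative$ satisfies $-f-\delta\in\bnonnegative$ for small $\delta>0$, whence $-1=\tfrac{1}{\delta}\bigl((-f-\delta)+f\bigr)\in\posi(\assess\cup\bnonnegative)$. One small remark: your (2)$\Rightarrow$(3) step does not need a separating-hyperplane argument --- an open set disjoint from any set $A$ is automatically disjoint from $\cl(A)$ (its complement is closed and contains $A$), and indeed the paper just uses openness of $\bnegative$ directly to find $f+\delta\in\bnegative$.
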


  Analogously with $\theory$, one can ask whether the class of P-coherent theories characterise the system $\btheory$.  It turns out that if we want to be sure this to  be the case (see Remark \ref{remark} for the reason why we want this) we need to add some structure to $\bnonnegative$. For instance:
  
  \begin{proposition}\label{prop:eqq}
Let $\assess \subseteq \gambles_R$ a set of assessments and assume that 
\begin{description}
\item[(pullup)] 
for every $f \in \gambles_R$, there is $\epsilon > 0$ such that $f+\epsilon \in \bnonnegative$.
\end{description}
 The following are then equivalent
\begin{enumerate}
     \item $\closure_{\btheory}(\assess)$ is P-coherent
    \item $\closure_{\btheory}(\assess) $ is logically consistent.
\end{enumerate}
  \end{proposition}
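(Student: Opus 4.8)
The plan is to establish the two implications separately. The implication $(1)\Rightarrow(2)$ is immediate; the implication $(2)\Rightarrow(1)$ is an ``explosion'' argument in $\btheory$, and it is there that the \textbf{(pullup)} hypothesis is genuinely needed. Throughout I use three facts that follow at once from the axioms and rules of $\btheory$ (and also from Proposition~\ref{prop:cohe}): $\closure_{\btheory}(\assess)$ is a convex cone, it contains $\assess$, and it contains all of $\bnonnegative$ (via the axiom (P)). I also use the standing assumption $\bnegative\neq\emptyset$ and the fact that, $\closure_{\btheory}=\cl\posi(\,\cdot\,)$ being idempotent, $\closure_{\btheory}(\assess)$ is logically consistent in the sense of~\eqref{eq:consistent} exactly when $\closure_{\btheory}(\assess)\neq\gambles_R$.

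For $(1)\Rightarrow(2)$: if $\closure_{\btheory}(\assess)$ is P-coherent then $\bnegative\cap\closure_{\btheory}(\assess)=\emptyset$; picking any $g_0\in\bnegative$, which exists because $\bnegative\neq\emptyset$, we get $g_0\notin\closure_{\btheory}(\assess)$, hence $\closure_{\btheory}(\assess)\neq\gambles_R$, i.e. $\closure_{\btheory}(\assess)$ is logically consistent. This direction uses neither \textbf{(pullup)} nor any structure of $\bnonnegative$.

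For $(2)\Rightarrow(1)$ I would argue by contraposition: suppose $\closure_{\btheory}(\assess)$ is not P-coherent, so there is $g_0\in\bnegative\cap\closure_{\btheory}(\assess)$, and show $\closure_{\btheory}(\assess)=\gambles_R$. Since $g\mapsto -g$ is a linear homeomorphism of $\gambles_R$, membership of $g_0$ in $\bnegative=\operatorname{int}\{g:-g\in\bnonnegative\}$ means exactly that $-g_0$ lies in the topological interior of $\bnonnegative$; as $\gambles_R$ contains the constants, this produces a ``gap'': some $c>0$ with $-g_0-c\in\bnonnegative\subseteq\closure_{\btheory}(\assess)$. Adding this to $g_0\in\closure_{\btheory}(\assess)$ and rescaling by $1/c$, both steps being instances of (L), yields $-1\in\closure_{\btheory}(\assess)$. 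Now \textbf{(pullup)} enters: for an arbitrary $g\in\gambles_R$ it gives $\epsilon>0$ with $g+\epsilon\in\bnonnegative\subseteq\closure_{\btheory}(\assess)$, and since $-\epsilon=\epsilon\cdot(-1)\in\closure_{\btheory}(\assess)$, applying (L) once more gives $g=(g+\epsilon)+(-\epsilon)\in\closure_{\btheory}(\assess)$. Hence every gamble lies in $\closure_{\btheory}(\assess)$, that is $\closure_{\btheory}(\assess)=\gambles_R$, which is the failure of logical consistency.

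I expect the only delicate point to be the middle step of the second implication, namely turning ``$g_0$ belongs to the \emph{open} cone $\bnegative$'' into the strict statement $-g_0-c\in\bnonnegative$ with $c>0$, which is what lets us synthesise $-1$ inside the theory. After that, one is simply running the principle of explosion, and \textbf{(pullup)} is precisely the hypothesis making that principle derivable in $\btheory$ (it fails in general without it, cf.\ Remark~\ref{remark}); the rest is bookkeeping to confirm that each step is an instance of (R), (P) or (L).
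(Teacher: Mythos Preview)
Your proof is correct and follows the same route as the paper: the direction $(1)\Rightarrow(2)$ is immediate, and for $(2)\Rightarrow(1)$ one argues by contraposition that failure of P-coherence forces $-1\in\closure_{\btheory}(\assess)$, after which \textbf{(pullup)} yields $\closure_{\btheory}(\assess)=\gambles_R$ exactly as you write. The only cosmetic difference is that the paper obtains $-1\in\closure_{\btheory}(\assess)$ by invoking Proposition~\ref{prop:cohe}, whereas you unfold that step directly via the interior argument on $\bnegative$; the remaining explosion step is identical.
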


Hence, whenever (*) holds and Proposition \ref{prop:eqq} can be applied, logical consistency in $\btheory$ for finitely generated theories can be checked efficiently.
This is the case of QT and of the family of P-systems defined on Bernstein polynomials introduced in the following sections of this work.


\subsection{Probabilistic interpretation of P-systems}

Interestingly, we can associate a ``probabilistic'' interpretation as before to the system $\btheory$ by computing the dual of a theory. Since $\gambles_R$ is a topological vector space, we can consider its dual space $\gambles_R^*$ of all bounded linear functionals $L: \gambles_R \rightarrow \reals$. Hence, with the additional condition that linear functionals preserve the unitary gamble, the dual cone of a theory $\bdomain\subset \gambles_R$  is given by
\begin{equation}
\label{eq:dualL}
\bdomain^\circ=\left\{L \in \gambles_R^* \mid L(g)\geq0, ~~L(1)=1,~\forall g \in \bdomain\right\}.
\end{equation} 
Based on Equation \ref{eq:dualL} and its properties, under the pullup assumption one then gets the analogous of Theorem \ref{theo:com} but for $\btheory$.
The question now is whether we can ``massage'' this result and obtain a sound and complete classical probabilistic semantics.
In this aim, first notice that to $\bdomain^\circ$ we can  associate its extension  $ \bdomain^\bullet$ in $\mathcal{M}$, that is, the set of all  charges
 on $\pspace$ extending an element in $\bdomain^\circ$. 
However, as shown in  \citep{BenavoliCC},  in doing this one cannot in general provide an adequate classical probabilistic interpretations to $\btheory$, except if one allow for instance the use of quasi-probabilities (probability distributions that admit negative values). This is essentially due to the fact that whenever
$\bnegative \subsetneq \negative_R$, there are negative gambles that cannot be proved to be negative in polynomial-time. This observation (made mathematically precise in \cite[Theorem 1]{BenavoliCC}) provides for instances an explanation  of all paradoxes of quantum mechanics, a special instance of a P-system. 

\section{Krivine-Vasilescu's nonnegativity}
Let $\gambles_R$ be the space of all polynomials of $n$ variables of degree $R$ in $\Omega$.\footnote{The $R$ in $\gambles_R$ should stay for ``Restricted'', here we also use it to denote  degree of the polynomial $R$.}
Let us  assume that $\Omega \subset \reals^n$ is a compact semi-algebraic set, i.e., a compact set described by polynomial inequalities
\begin{equation}
\label{eq:Omegaconstr}
 \Omega=\left\{x \in \reals^n:  ~ c_j(x)\geq0, ~~j =1,\dots,m \right\}.
\end{equation}

Let $\bar{c}_j$ be equal to $\sup_{x\in  \Omega} c_j(x)$, define
\begin{equation}
\label{eq:normC}
\hat{c}_j(x)=\left\{\begin{array}{lll}
               c_j(x)/\bar{c}_j &\text{ if } &\bar{c}_j>0,\\
                c_j(x)          &\text{ if } &\bar{c}_j=0.\\
              \end{array}\right.
\end{equation}
Therefore, it results that $\hat{c}_j(x)\geq0$ and $1-\hat{c}_j(x)\geq0$ for each $x\in  \Omega$.
Consider the closed convex cone

\begin{equation}
 \label{eq:b0KV}
\resizebox{.95\hsize}{!}{$\bnonnegative_d =\Bigg\{  \sum\limits_{({\alpha},{\beta})\in  \mathbb{N}^{2m}_{d}} \hspace{-5mm}u_{{\alpha}{\beta}} 
  \hat{c}_{1}^{\alpha_1}\cdots \hat{c}_{m}^{\alpha_m} (1-\hat{c}_{1})^{\beta_1}\cdots (1-\hat{c}_{m})^{\beta_m} :  u_{{\alpha}{\beta}}\in \mathbb{R}^{\geq} \Bigg\}$},
 \end{equation}
where   $ \mathbb{N}^{2|c|}_{d}=\{({\alpha},{\beta})\in\mathbb{N}^{2|c|} : |{\alpha}+{\beta}|\leq d\}$ and
  $|{\alpha}|=\sum_{i=1}^m \alpha_i$. We denote with $R$ the maximum degree of the polynomials in $\bnonnegative_d$, so that
  $\bnonnegative_d\subset \gambles_R$.

\begin{assumption}
We assume that $  \bnonnegative_d $ satisfies the ``pullup'' property for every $d\in \mathbb{N}$.
\end{assumption}

We can then define the Krivine-Vasilescu nonnegativity certificate \citep{krivine1964anneaux,vasilescu2003spectral,lasserre2009moments}.
\begin{definition}[Krivine-Vasilescu]
A polynomial of $g \in \gambles_R$ is ``nonnegative'' in $\pspace$ when it belongs to $\bnonnegative_d$
defined in \eqref{eq:b0KV}.
\end{definition}
That is, the function is ``nonnegative'' whenever the coefficients $u_{{\alpha}{\beta}}$ are nonnegative.

\begin{proposition}
Given definition of $\bnonnegative_d$ in \eqref{eq:b0KV}, the consistency problem \eqref{eq:bdec} can be solved in polynomial-time
\end{proposition}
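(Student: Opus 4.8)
The plan is to show that deciding the membership problem \eqref{eq:bdec} — namely whether there exist $\lambda_i\geq 0$ with $-1-\sum_{i=1}^{|\mathcal{G}|}\lambda_i g_i\in\bnonnegative_d$ — reduces to a linear feasibility problem of polynomial size in the bit-length of the inputs, and hence lies in P by the polynomial-time solvability of linear programming. The key observation is that $\bnonnegative_d$ as defined in \eqref{eq:b0KV} is the conic hull of a \emph{finite, explicitly enumerable} generating set: the products $\hat c_1^{\alpha_1}\cdots\hat c_m^{\alpha_m}(1-\hat c_1)^{\beta_1}\cdots(1-\hat c_m)^{\beta_m}$ with $(\alpha,\beta)\in\mathbb{N}^{2m}_d$. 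The number of such generators is $\binom{2m+d}{d}$, which for fixed $m$ and $d$ is a constant (and more generally is polynomial in $d$ for fixed $m$); each generator is a polynomial in $\gambles_R$ whose coefficient vector, in the monomial basis, can be computed by a polynomial number of arithmetic operations from the coefficients of the $c_j$ and the normalising constants $\bar c_j$.

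First I would fix the monomial basis of $\gambles_R$ and identify every polynomial with its coefficient vector in $\reals^{N}$, where $N=\binom{n+R}{R}$. Writing $b_{\alpha\beta}$ for the coefficient vector of the $(\alpha,\beta)$-th generator, the condition $-1-\sum_i\lambda_i g_i\in\bnonnegative_d$ becomes: there exist $\lambda_i\geq 0$ and $u_{\alpha\beta}\geq 0$ such that $-e_0-\sum_i\lambda_i\,\mathrm{coef}(g_i)=\sum_{(\alpha,\beta)}u_{\alpha\beta}\,b_{\alpha\beta}$, where $e_0$ is the coefficient vector of the constant $1$. This is a system of $N$ linear equalities in the $|\mathcal{G}|+\binom{2m+d}{d}$ nonnegative unknowns $\lambda_i,u_{\alpha\beta}$ — a linear program whose size is polynomial in the input size (for $m,d$ fixed, or treating $d$ as part of the input with $m$ fixed). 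By Khachiyan's theorem, feasibility of a rational linear program is decidable in polynomial time, which gives the claim. I would also note that the same argument applied to \eqref{eq:lowerprev} shows that $\underline E_B(q)$ is computable in polynomial time, since it is the optimum of a linear program of the same form.

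The main subtlety — and the step I would treat most carefully — is the role of the normalising constants $\bar c_j=\sup_{x\in\Omega}c_j(x)$ appearing in \eqref{eq:normC}: computing a supremum of a polynomial over a semi-algebraic set is itself not obviously a polynomial-time operation, so one must either regard the $\hat c_j$ (equivalently the $\bar c_j$) as given as part of the problem instance, or restrict to cases where they are available in closed form. I would make this explicit, observing that once the $\hat c_j$ are fixed the construction of the generator list and the reduction to linear programming is entirely mechanical. A secondary point worth a sentence is that $\bnonnegative_d$ being a closed convex cone (as assumed in Section~\ref{sec:con}) is exactly what makes \eqref{eq:bdec} the right formulation, and that Proposition~\ref{prop:cohe} guarantees this membership test does decide P-coherence of $\closure_{\btheory}(\assess)$; so the polynomial-time membership test is genuinely a polynomial-time consistency test. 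Everything else is routine bookkeeping about the size of the monomial basis and the cost of polynomial multiplication.
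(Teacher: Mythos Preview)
Your proposal is correct and follows exactly the same approach as the paper: reduce the membership test in \eqref{eq:bdec} to a linear programming feasibility problem, then invoke polynomial-time solvability of LP. The paper's own proof is in fact a single sentence (``the membership problem $g\in\bnonnegative_d$ can be formulated as a linear programming problem'') followed by the worked example in the simplex section; your treatment is considerably more detailed---including the caveat about the normalising constants $\bar c_j$, which the paper does not address---but the underlying idea is identical.
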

The proof is immediate and consists in showing that the membership problem $g \in \bnonnegative_d$ can be formulated as a \textit{linear programming} problem.
We will give an example of that in the next section.  
Let $\mathcal{G}$ be a finite set of assessments, and $\bdomain$ its deductive closure $\text{posi}(\mathcal{G}\cup \bnonnegative_d)$, with the given definition  $\bnonnegative_d$ in \eqref{eq:b0KV}, satisfying $-1 \notin \bdomain$. By Proposition \ref{prop:cohe}  $\bdomain$ is P-coherent, and therefore also logical consistent.


Moreover, it is not difficult to prove that the dual of $\bdomain$  is
\begin{align}
\label{eq:credaldef}
\mathcal{Q}=\Big\{L \in  \mathsf{S} \mid ~L(g)\geq0, ~~\forall g \in \mathcal{G}\Big\},
\end{align} 
with the set of (belief) \textit{states} defined as:
\begin{equation}
\label{eq:bernsteinstates}
\begin{aligned}
\mathsf{S}=\Big\{L \in \gambles_R^* \mid L(1)=1,\\
L(\hat{c}_{1}^{\alpha_1}\cdots \hat{c}_{m}^{\alpha_m} (1-\hat{c}_{1})^{\beta_1}\cdots (1-\hat{c}_{m})^{\beta_m})\geq0,\\
\forall~({\alpha},{\beta})\in  \mathbb{N}^{2|c|}_{d}\Big\}.
\end{aligned}
\end{equation}
Note that, the linear operator acts on the monomials (this follows by linearity) and, therefore, the dual space is isomorphic
to $\reals^{s_n(d)}$, with $s_n(d)$ being the number of all monomials for a generic polynomial of $n$ variables and degree $d$.
That means we can define the real numbers
\begin{equation}
\label{eq:linearoperator}
z_{\gamma_1\gamma_2\dots \gamma_n}=L(x_1^{\gamma_1}x_2^{\gamma_2}\cdots x_n^{\gamma_n})  \in \reals,
\end{equation} 
with $\gamma_i \in \mathbb{N}$, and  we can rewrite 
$L(f)$, for any polynomial $f \in \gambles_R$, as a function of the vector of variables $z\in \reals^{s_n(d)}$, whose components are the real variables $z_{\gamma_1\gamma_2\dots \gamma_n}$ defined above. 


\subsection{Simplex of probability}
A particular case of the Krivine-Vasilescu's nonnegativity criterion is obtained  when $ \Omega$ is the probability simplex: 
\begin{equation}
\label{eq:OmegaconstrStheta}
 \Omega=\left\{\theta \in \reals^n:  ~ \theta_j\geq0, ~~1-\sum_{j=1}^n \theta_j\geq 0 \right\},
\end{equation}
where the changed notation, $\theta$ instead of $x$, reflects the fact that the variables are probabilities.\footnote{
Note that, in this case $m=n+1$ in \eqref{eq:Omegaconstr}.}
In this case, we can simplify the definition of  nonnegativity  in \eqref{eq:b0KV}  as \cite[Sec.5.4.1]{lasserre2009moments}:
\begin{equation}
 \label{eq:b0simpl0}
 \begin{aligned}
 \resizebox{.95\hsize}{!}{$\bnonnegative_d =\Bigg\{  \sum\limits_{\alpha\in  \mathbb{N}^{n+1}_{d},|\alpha|\leq d} \hspace{-3mm}u_{{\alpha}} 
  \theta_{1}^{\alpha_1}\cdots  \theta_{n}^{\alpha_{n}}(1-\theta_1-\dots-\theta_n)^{\alpha_{n+1}} :  u_{{\alpha}}\in \mathbb{R}^{\geq}  \Bigg\}$}.
 \end{aligned}
 \end{equation}
 Note that, the maximum degree of the polynomials in $\bnonnegative_d$ is $d$ and, therefore, $R=d$.
 It is easy to prove the following.
 \begin{proposition}
 \label{eq:proppullBern}
   $  \bnonnegative_d $ satisfies the ``pullup'' property for every $d\in \mathbb{N}$.
 \end{proposition}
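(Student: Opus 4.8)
The plan is to prove the \emph{pullup} property directly and constructively, exploiting two features special to the simplex case: here $R=d$, and the top-degree generators of $\bnonnegative_d$ in \eqref{eq:b0simpl0} are, up to positive scalars, exactly the Bernstein basis of the $n$-simplex.

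Set $\theta_{n+1}\coloneqq 1-\theta_1-\dots-\theta_n$, and for a multi-index $\alpha\in\mathbb{N}^{n+1}$ with $|\alpha|=d$ write $B_\alpha\coloneqq\theta_1^{\alpha_1}\cdots\theta_n^{\alpha_n}\theta_{n+1}^{\alpha_{n+1}}$; by \eqref{eq:b0simpl0} each such $B_\alpha$ is one of the generators of $\bnonnegative_d$ (the case $|\alpha|=d$). First I would record the multinomial identity $\sum_{|\alpha|=d}\binom{d}{\alpha}B_\alpha=(\theta_1+\dots+\theta_n+\theta_{n+1})^d=1$, so that the constant $1$ is already a strictly positive combination of the $B_\alpha$. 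Next I would invoke the classical fact that the $\binom{n+d}{n}$ monomials $\{B_\alpha:|\alpha|=d\}$ form a basis of the space $\gambles_d$ of polynomials of degree at most $d$ in $\theta_1,\dots,\theta_n$; equivalently, dehomogenising the monomial basis $\{t^\alpha:|\alpha|=d\}$ of the degree-$d$ forms in $t_1,\dots,t_{n+1}$ via $t_i\mapsto\theta_i$ and $t_{n+1}\mapsto\theta_{n+1}$ is a linear isomorphism onto $\gambles_d$, so spanning and independence are immediate.

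Given an arbitrary $f\in\gambles_R=\gambles_d$, I would then expand $f=\sum_{|\alpha|=d}b_\alpha B_\alpha$ with $b_\alpha\in\reals$ and pick any $\epsilon>0$ with $\epsilon\ge\max_{|\alpha|=d}(-b_\alpha)$; since every $\binom{d}{\alpha}\ge 1$, this forces $b_\alpha+\epsilon\binom{d}{\alpha}\ge 0$ for all $\alpha$, whence
\[
f+\epsilon=\sum_{|\alpha|=d}\Bigl(b_\alpha+\epsilon\binom{d}{\alpha}\Bigr)B_\alpha\in\bnonnegative_d ,
\]
because the right-hand side has exactly the shape prescribed in \eqref{eq:b0simpl0} with nonnegative coefficients (the remaining coefficients $u_\alpha$, for $|\alpha|<d$, being set to $0$). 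This is precisely the pullup property, and the degenerate case $d=0$, where $B_0=1$ and $\gambles_0=\reals$, is subsumed.

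I do not expect a genuine obstacle: the argument is elementary once the basis statement is in hand, and that statement is either cited or proved in a single line by the homogenisation isomorphism above. The only point requiring slight care is to keep the added constant written as $\sum_{|\alpha|=d}u_\alpha B_\alpha$ with $u_\alpha\ge 0$, rather than as a bare scalar, so that $f+\epsilon$ visibly lies in the cone $\bnonnegative_d$ as defined in \eqref{eq:b0simpl0}.
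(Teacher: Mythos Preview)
Your proof is correct and follows essentially the same approach as the paper: both use that the Bernstein monomials of degree $d$ form a basis of $\gambles_d$ contained in $\bnonnegative_d$ and that they sum (with positive coefficients) to $1$. The only cosmetic difference is that the paper factors the argument through an auxiliary lemma (a basis $B\subseteq\bnonnegative$ whose elements can be completed to a positive constant yields pullup), whereas you carry out the same computation directly by expanding $f$ in the $B_\alpha$ and absorbing $\epsilon$ via the multinomial identity; the content is the same.
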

 We have also that  \eqref{eq:bernsteinstates} in this case becomes:  
\begin{equation}
\label{eq:bernsteinstates1}
\begin{aligned}
\mathsf{S}=\Big\{L \in \gambles_R^* \mid L(1)=1,\\
 L( \theta_{1}^{\alpha_1}\cdots  \theta_{n}^{\alpha_{n}}(1-\theta_1-\dots-\theta_n)^{\alpha_{n+1}})\geq0,\\
\forall~\alpha\in\mathbb{N}^{n+1}_{d},|\alpha|\leq d\Big\}.
\end{aligned}
\end{equation}
 
We recall that the Bernstein (multivariate) polynomials of degree $d$ are:
$$
B_{\gamma,d}(\theta)={d \choose \gamma}\theta_1^{\gamma_1}\dots \theta_{n}^{\gamma_n}(1-\theta_1-\dots-\theta_n)^{d-|\gamma|},
$$
where $\gamma=(\gamma_1,\dots,\gamma_n)$ with $\gamma_i\in\mathbb{N}$. Moreover, all Bernstein  polynomials of fixed degree $d$ forms a
basis for the linear space of all polynomials whose degree is at most $d$, and  they form a partition of unity:
$\sum_{\gamma:|\gamma|=d}B_{\gamma,d}(\theta)=1$.
By exploiting these properties, we can  prove the following.
\begin{proposition}
\label{prop:coneBern}
 A polynomial $g:\Omega \rightarrow \reals$ of degree $d$ belongs to the cone in \eqref{eq:b0simpl0} iff it belongs to
\begin{equation}
 \label{eq:b0simpl}
 \begin{aligned}
   \resizebox{.95\hsize}{!}{$\widetilde{\Sigma}^{\geq}_{d} =\Bigg\{  \sum\limits_{\alpha\in  \mathbb{N}^{n+1}_{d},|\alpha|=d} \hspace{-3mm}u_{{\alpha}} 
  \theta_{1}^{\alpha_1}\cdots  \theta_{n}^{\alpha_{n}}(1-\theta_1-\dots-\theta_n)^{\alpha_{n+1}} :  u_{{\alpha}}\in \mathbb{R}^{\geq}  \Bigg\}$}.
 \end{aligned}
 \end{equation}
\end{proposition}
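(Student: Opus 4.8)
The plan is to prove the \emph{stronger} statement that the two cones in \eqref{eq:b0simpl0} and \eqref{eq:b0simpl} coincide as subsets of $\gambles_R$; the proposition then follows at once (for a polynomial of any fixed degree, and in particular of degree $d$, membership in one is equivalent to membership in the other). For $\alpha=(\alpha_1,\dots,\alpha_{n+1})\in\mathbb{N}^{n+1}$ abbreviate
\[
p_\alpha(\theta):=\theta_1^{\alpha_1}\cdots\theta_n^{\alpha_n}(1-\theta_1-\dots-\theta_n)^{\alpha_{n+1}},
\]
so that $\bnonnegative_d=\posi\{\,p_\alpha:|\alpha|\le d\,\}$ and $\widetilde{\Sigma}^{\geq}_{d}=\posi\{\,p_\alpha:|\alpha|=d\,\}$. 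Note that for $|\alpha|=d$, putting $\gamma=(\alpha_1,\dots,\alpha_n)$ (hence $\alpha_{n+1}=d-|\gamma|$), one has $p_\alpha={d\choose\gamma}^{-1}B_{\gamma,d}$, so $\widetilde{\Sigma}^{\geq}_{d}$ is exactly the conic hull of the degree-$d$ Bernstein basis. It therefore suffices to show that $\{\,p_\alpha:|\alpha|\le d\,\}$ and $\{\,p_\alpha:|\alpha|=d\,\}$ generate the same cone.

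One inclusion is trivial: if $|\alpha|=d$ then $|\alpha|\le d$, so every generator of $\widetilde{\Sigma}^{\geq}_{d}$ is already a generator of $\bnonnegative_d$, whence $\widetilde{\Sigma}^{\geq}_{d}\subseteq\bnonnegative_d$.

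For the reverse inclusion, fix $\alpha$ with $e:=|\alpha|\le d$. The key step is the ``lifted'' partition of unity obtained by raising the identity $1=\theta_1+\dots+\theta_n+(1-\theta_1-\dots-\theta_n)$ to the power $d-e$ and expanding by the multinomial theorem:
\[
1=\sum_{\delta\in\mathbb{N}^{n+1},\,|\delta|=d-e}{d-e\choose\delta}\,p_\delta(\theta),
\]
with every multinomial coefficient ${d-e\choose\delta}$ strictly positive. Multiplying through by $p_\alpha$ and using $p_\alpha\,p_\delta=p_{\alpha+\delta}$ gives
\[
p_\alpha(\theta)=\sum_{|\delta|=d-e}{d-e\choose\delta}\,p_{\alpha+\delta}(\theta),
\]
where each index satisfies $|\alpha+\delta|=e+(d-e)=d$ and each coefficient is positive. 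Hence $p_\alpha\in\widetilde{\Sigma}^{\geq}_{d}$ for every $\alpha$ with $|\alpha|\le d$, so $\bnonnegative_d\subseteq\widetilde{\Sigma}^{\geq}_{d}$. Together with the previous paragraph this yields $\bnonnegative_d=\widetilde{\Sigma}^{\geq}_{d}$, which proves the proposition. (Equivalently, the same identity can be produced by iterating the Bernstein degree-elevation formula, starting from $\sum_{|\gamma|=d}B_{\gamma,d}=1$.)

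I do not expect a genuine obstacle here: the entire argument reduces to the observation that multiplying a generator $p_\alpha$ of degree $e<d$ by the $(d-e)$-th power of $\theta_1+\dots+\theta_n+(1-\theta_1-\dots-\theta_n)$ keeps all coefficients nonnegative while pushing the multi-index into the set $\{|\alpha|=d\}$. Spotting this lifting trick — and checking the elementary degree bookkeeping in the multinomial expansion — is the only non-mechanical part.
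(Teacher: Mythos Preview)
Your proof is correct and follows essentially the same approach as the paper: both establish the non-trivial inclusion $\bnonnegative_d\subseteq\widetilde{\Sigma}^{\geq}_{d}$ by multiplying a generator $p_\alpha$ of degree $e<d$ by the Bernstein partition of unity of degree $d-e$ (i.e., the multinomial expansion of $(\theta_1+\dots+\theta_n+(1-\theta_1-\dots-\theta_n))^{d-e}$) to lift it into the top-degree slice. Your version is simply more explicit about the multinomial bookkeeping and about the trivial reverse inclusion, which the paper dismisses as ``obvious''.
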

 For this reason, we call $\widetilde{\Sigma}^{\geq}_{d}$ (equiv. ${\Sigma}^{\geq}_{d}$)  the cone of \textit{Bernstein nonnegative polynomials}
of degree $d$.\footnote{We could again simplify  \eqref{eq:bernsteinstates1}, but we leave the redundant formulation 
\eqref{eq:bernsteinstates1} because it requires to specify $L$ on the monomials too.}


In general, it holds that
$$
\widetilde{\Sigma}^{\geq}_{d} \subset \gambles_R^{\geq}
$$
that is, there exist nonnegative polynomials that are not included in $\widetilde{\Sigma}^{\geq}_{d}$.
\begin{example}
We use this counter-example from \citep{DeCooman2015JAIR}
 $$
 q(\theta)=\theta_1^2-\theta_1\theta_2+\theta_2^2,
 $$
 with $n=d=2$,  which is nonnegative in $\Omega$. 
 Now consider the  cone
\begin{equation}
\label{eq:coneposex}
 \resizebox{0.95\hsize}{!}{$
\begin{aligned}
&\widetilde{\Sigma}^{\geq}_{2}=\{u_{ijk}\in \reals^{\geq}: u_{002}(1-\theta_1 - \theta_2)^2 +  u_{011}\theta_2(1-\theta_1 - \theta_2)\\
&+ u_{020}\theta_2^2 +  u_{101}\theta_1(1-\theta_1 - \theta_2) + u_{110}\theta_1\theta_2 + u_{200}\theta_1^2\} 
\end{aligned}$}
\end{equation}
and an empty $\mathcal{G}$. The lower prevision of $q$ can be computed
 as follows:
\begin{equation}
\begin{aligned}
 \underline{E}_B(q)=\sup_{\lambda_0\in \reals,u_{ijk}\in \reals^{\geq}} \lambda_0\\
 -u_{002} + u_{101} - u_{200} + 1 =0\\
 -2u_{002} + u_{011} + u_{101} - u_{110} - 1=0\\
 2u_{002} - u_{101}=0\\
 -u_{002} + u_{011} - u_{020} + 1=0\\
 2u_{002} - u_{011}=0\\
 - \lambda_0 - u_{002}=0
\end{aligned}
\end{equation}
where the equality constraints have been obtained by equating the coefficients of the monomials
in 
$$
\begin{aligned}
q(\theta)-\lambda_0 = u_{002}(1-\theta_1 - \theta_2)^2 +  u_{011}\theta_2(1-\theta_1 - \theta_2)\\
+u_{020}\theta_2^2 +  u_{101}\theta_1(1-\theta_1 - \theta_2) + u_{110}\theta_1\theta_2 + u_{200}\theta_1^2.
\end{aligned}
$$
The solution of the above LP problem is
 $$\begin{aligned}
    &[\lambda_0,u_{002},u_{011},u_{020},u_{101},u_{110},u_{200}]\\
&= [-0.5, \, 0.5, \, 1 ,\,  1.5,\,  1 ,\,  0 , \, 1.5],
    \end{aligned}
 $$
 which means that $\underline{E}_B(q)=-0.5<0$.
 It is easy to verify that  $\underline{E}_B(q)<0$ for any $d\geq2$ and, therefore,  $q$ does not belong to $ \bnonnegative_d$ for any $d$.
 However, Figure \ref{fig:1} shows that $\underline{E}_B(q)$ quickly tends  to zero at the increase of the degree $d$ of $ \bnonnegative_d$.
 Therefore, we can build a  hierarchy of LPs \cite[Sec.5.4]{lasserre2009moments} of increasing size such that
 $$
 \underline{E}_B(q)  \xrightarrow{d\rightarrow\infty}  \underline{E}(q).
 $$
 This is true in general provided that $\mathcal{G}\cap \gambles_R^{<}=\emptyset$.
 \begin{figure}[htp!]
 \centering
  \includegraphics[width=7cm]{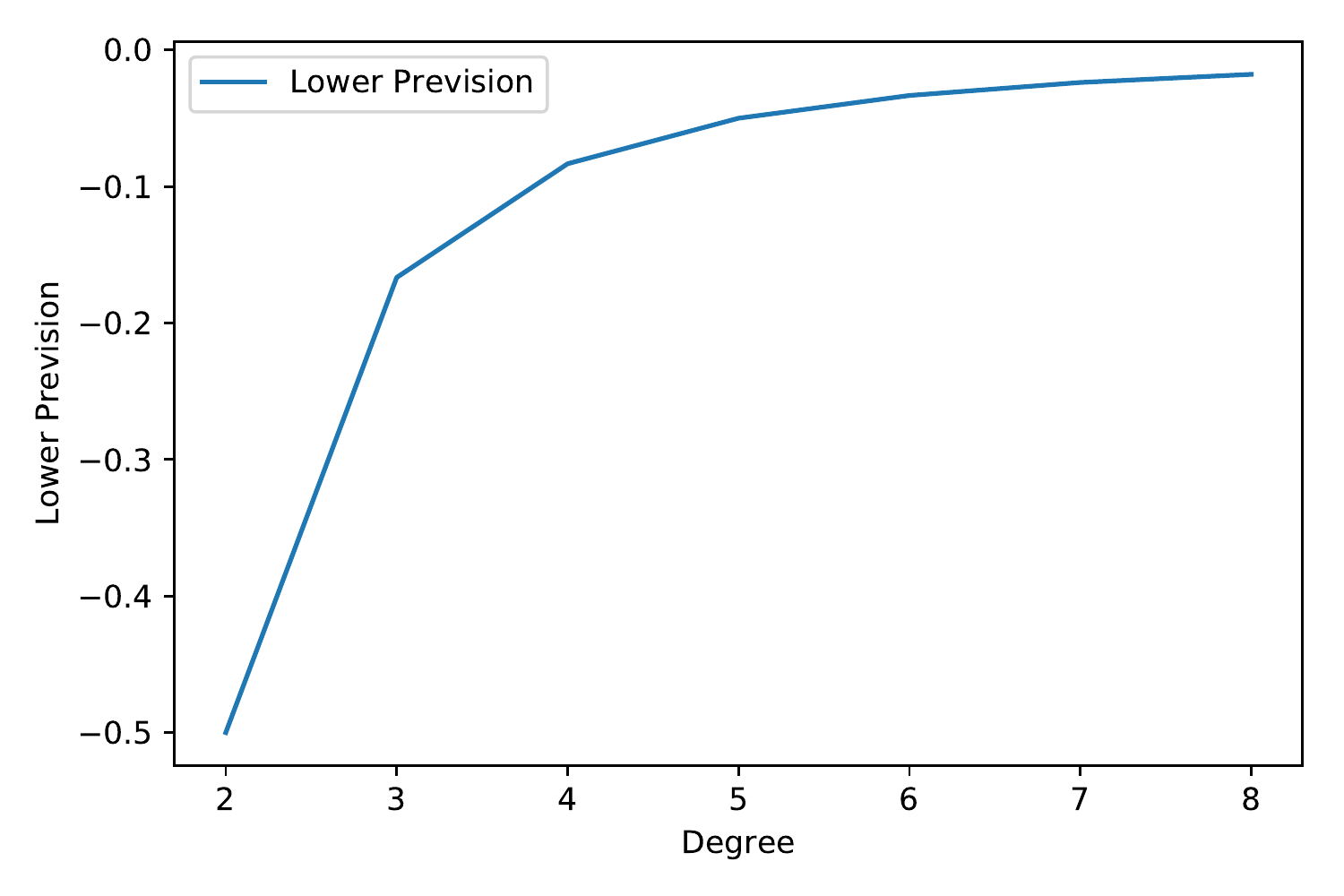}
  \caption{Convergence of $\underline{E}_B(q)$  to $\underline{E}(q)$ at the increase of the degree $d$.}
  \label{fig:1}
 \end{figure}
\end{example}

\subsection{Updating via partition of unity}
$\gambles_R$ is a space of polynomials and, therefore, it does not include indicator functions.
That means we cannot define conditioning.
However, we can still update our beliefs in a weaker way using a partition of unity.
We have seen that the Bernstein polynomials form a partition of unity, e.g.,
\begin{align}
\label{eq:Part1_d1}
n=3,d=1:~~~ & \{\theta_1,\theta_2,\theta_3,1-\theta_1-\theta_2-\theta_3\},
\end{align}
they are nonnegative functions and sum up to one.  We can then compute an updated lower prevision for a  gamble $q$ as: 

\begin{equation}
\label{eq:regularPost}
 \begin{array}{rlrl}
 \underline{E}_B(q|\pi)= &\sup\limits_{\lambda_j\geq0, \lambda_0} \lambda_0\\
  &s.t.\\
  &  (q-\lambda_0)\pi- \sum\limits_{j=1}^{|G|}\lambda_jg_j \in \bnonnegative_d,\\  
  \end{array}
\end{equation}
where $d$ must be large enough to guarantee that the membership problem is well-posed and $\pi$ denotes any \textit{subset sum } of the partition of unity
(in the example in \eqref{eq:Part1_d1}  $\pi\in\{\theta_1,\theta_2,\theta_3,1-\theta_1-\theta_2-\theta_3,\theta_1+\theta_2,
\theta_1+\theta_3,\theta_2+\theta_3,\dots\}$).
To alternatively justify this rule, we point out that $\pi$ can be interpreted as a  multinomial likelihood\footnote{The multinomial distribution is used to model the outcome of $\ell$ experiments, where the outcome of each trial has a categorical distribution, e.g., rolling a $k$-sided die $\ell$ times.} and the result of \eqref{eq:regularPost} as a bounded rationality version of a \textit{regular} posterior \cite[Appendix J5]{walley91}.

\subsection{A Bell inequality in the Bernstein world}
In this section, we derive a Bell's type inequality in the Bernstein world:
a probabilistic inequality that holds in $\theory$ but that is violated in $\theory^*$ (Bernstein world).
We will derive it by building a negative polynomial that has positive
prevision in $\theory^*$.
In the next section, we will show that the \textit{state} assigning a positive prevision
to such polynomial is \textit{entangled}!
%
For this purpose, we  consider  two coins, that we denote as $l$ (left) and, respectively, $r$ (right), and define
$$
\begin{bmatrix}
 \theta_1\\
 \theta_2\\
 \theta_3\\
 1-\theta_1-\theta_2-\theta_3
\end{bmatrix}
=\text{Prob}\begin{bmatrix}
 H_lH_r\\
 T_lH_r\\
 H_lT_r\\
 T_l,T_r\end{bmatrix},
$$
where $H_i,T_i$ denote the outcome Heads and, respectively, Tails for the left or right coin. In this case, the 
possibility space is
\begin{equation}
\label{eq:Omegaconstrex}
 \Omega=\left\{\theta \in \reals^3:  ~ \theta_1,\theta_2,\theta_3\geq0, ~~1-\theta_1-\theta_2-\theta_3\geq 0 \right\}.
\end{equation}
Note that, the following \textit{marginal} relationships hold:
$$
\begin{aligned}
\theta_{H_l}=\text{Prob}(H_l)=\theta_1+\theta_3,~~\theta_{H_r}=\text{Prob}(H_r)=\theta_1+\theta_2.
\end{aligned}
$$
Let $d$ be equal to $2$ and consider the \textit{state}:
\begin{equation}
\label{eq:state}
\begin{aligned}
 L(\theta_1)=z_{100}=1/3 & & L(\theta_1^2)&=z_{200}=1/3\\
 L(\theta_2)=z_{010}=1/6 & & L(\theta_2^2)&=z_{020}=0\\
 L(\theta_3)=z_{001}=1/6 & & L(\theta_3^2)&=z_{002}=0\\
  L(\theta_1\theta_2)=z_{110}=0 & & L(\theta_1\theta_3)&=z_{101}=0\\
   L(\theta_2\theta_3)=z_{011}=1/6 & & L(1)&=z_{000}=1,\\
\end{aligned} 
\end{equation}
which  belongs to \eqref{eq:bernsteinstates1}.

Now consider the  polynomial:
$$
q(\theta)=-(\theta_1 +\theta_2)^2 - (\theta_1 + \theta_3)(-2 \theta_1 - 2 \theta_2 + 1) -\epsilon,
$$
with $\epsilon>0$ and observe that $q(\theta)\leq -\epsilon$. 
The polynomial is negative! However, its prevision\footnote{The lower and upper previsions coincide for $q$.} $E_B(q)$ w.r.t.\ the state \eqref{eq:state} is equal to
\begin{align}
\nonumber
L(q)=-z_{001} + 2z_{011} - z_{020} - z_{100} + 2 z_{101} + z_{200}-\epsilon\\
=\frac{1}{6}-\epsilon\geq0.
\end{align} 
Therefore, we have violated an inequality that holds in classical probability  ($E(q)\leq -\epsilon$ in $\theory$), although the set of desirable gambles
$$
\bdomain=\{g \in \gambles_R \mid L(g)\geq0\},
$$
with $L$ defined in \eqref{eq:state}, is logical consistent in $\btheory$ (P-coherent).
This  is the essence of Bell's type inequalities: the quantum weirdness which  is also present in  Bernstein's world.
\begin{figure*}[htp!]
\centering
 \includegraphics[width=15cm]{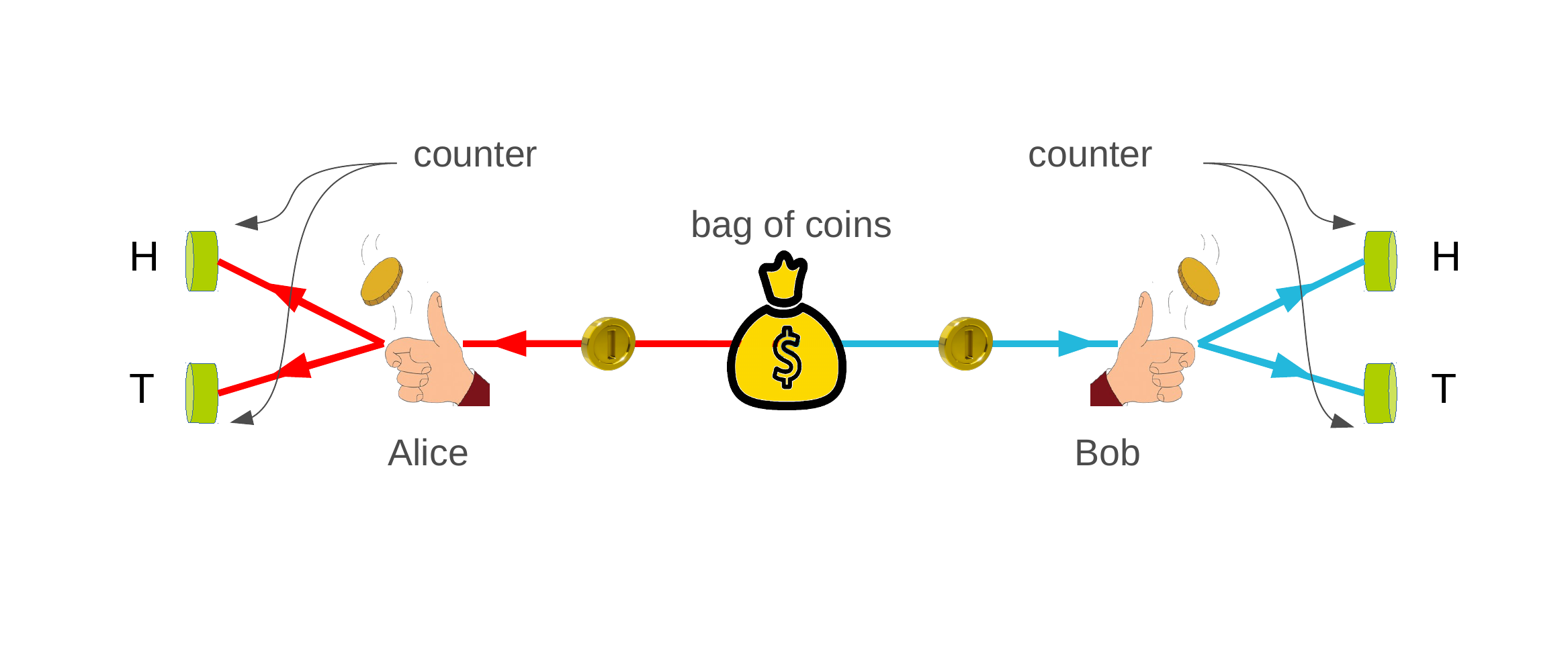}\vspace{-1.2cm}
 \caption{Coin toss experiment in Bernstein's world.}
  \label{fig:2}
\end{figure*}

\subsection{Entanglement }
We continue the previous example  and we set up a thought experiment
that uncovers the entanglement of the two coins. 

Assume two  coins are drawn  from a bag  in the  state \eqref{eq:state}.
We give the left coin to Alice and the right coin to Bob as depicted in Figure \ref{fig:2}.

We will now show that after the coins move apart, there are ``matching'' correlations between the output of their toss. That is a measurement (though a toss) of the bias of one coin
  will allow the prediction, with certainty, of the outcome of the measurement (toss) on the other coin.

Assume that Alice tosses her coin first and that it lands Heads, then she can  compute her  updated   prevision
(through \eqref{eq:regularPost}  with $\pi= \theta_1+\theta_3$) for
the gamble $q(\theta)=\theta_1+\theta_2$ (Heads on Bob's coin).
We can easily do this computation in the dual space.\footnote{Since the state \eqref{eq:state} is precisely specified,
we do not need to solve any optimisation  to compute the posterior prevision.} Note that
\begin{align}
\nonumber
0=L((q-\lambda_0)\pi)=-\lambda_0z_{001} - \lambda_0z_{100} + z_{011} \\
\nonumber
+z_{101} + z_{110} + z_{200}
\end{align} 
has solution  $\lambda_0=1$.  Alice instantaneously  knows that the result of the toss
of Bob's coin will be Heads.
Similarly, we can consider all the other cases
\begin{align}
\nonumber
q&=1-\theta_1-\theta_2,&\pi&=\theta_1+\theta_3,~ & \lambda_0&=0 \\
\nonumber
q&=\theta_1+\theta_2,&\pi&=1-\theta_1-\theta_3,~ &\lambda_0&=0 \\
\nonumber
q&=1-\theta_1-\theta_2,&\pi&=1-\theta_1-\theta_3,~  &\lambda_0&=1. 
\end{align}
This means that as soon as Alice sees the result of the  toss of the left coin, she immediately knows
that the result of the toss of Bob's  coin will be the same. The two coins  are totally ``correlated''.
Classical correlations can be explained by a common cause, or correlated ``elements of reality'' \citep{einstein1935can}. 
This is not the case in Bernstein's world.

In fact, although  the marginal operators satisfy
\begin{equation}
\begin{aligned}
 L(\theta_{H_r})&=L(\theta_1+\theta_2)&=z_{100}+z_{010}&=\frac{1}{2},\\
  L(\theta_{H_l})&=L(\theta_1+\theta_3)&=z_{100}+z_{001}&=\frac{1}{2},\\
 L(\theta_{H_r}^2)&=L((\theta_1+\theta_2)^2)&=z_{200}+2z_{110}+z_{020}&=\frac{1}{3},\\
 L(\theta_{H_l}^2):&=L((\theta_1+\theta_3)^2)&=z_{200}+2z_{101}+z_{002}&=\frac{1}{3},\\
\end{aligned} 
\end{equation}
these are the same moments we would get if the marginal distribution of the two coins is uniform.
That means that if we send an ensemble of coins to Alice (Bob), and she (he) tosses them, she (he) will experience Heads half of the times.
A classical correlation model that is compatible with these marginal moments   is
given by this probabilistic mixture of atomic charges (Dirac's deltas):
$$
p(\theta)=\frac{1}{2}\delta_{\left\{\begin{bmatrix}
                      \frac{1}{6} (3 - \sqrt{3})\\
                      0\\
                      0\\
                      \frac{1}{6} (3 + \sqrt{3})
                     \end{bmatrix}\right\}}(\theta)
                     +\frac{1}{2}\delta_{\left\{\begin{bmatrix}
                      \frac{1}{6} (3 + \sqrt{3})\\
                      0\\
                      0\\
                      \frac{1}{6} (3 - \sqrt{3})
                     \end{bmatrix}\right\}}(\theta).
$$
However, this probabilistic model (or any other) can never satisfy 
the moment constraints \eqref{eq:state} or, equivalently, can never violate the Bell's type inequality presented in the previous section.
We have entanglement!

\section{Conclusions}
In this work, after a brief description and analysis of the structural properties of P-coherent models, we have shown that the space of Bernstein polynomials in which nonnegativity is specified by the Krivine-Vasilescu  certificate is yet another instance of this theory and that therefore it is possible to construct in it a thought experiment uncovering entanglement with classical (hence non quantum) coins.


As a final side remark, we believe that formulating the theory of desirable gambles directly as a logic system  provides an elegant way for extending the framework to the accept-reject one but also for merging the latter with the theory of choice functions.

\bibliographystyle{plain}
\bibliography{biblio}
 
 \clearpage
\newpage

 \appendix
 
 \section{Technicalities}
 \subsection{Proofs of Section \ref{sec:tdg}}

\begin{proofof}{Theorem~\ref{prop:cons}}
 The equivalence $\cl\posi(\assess)=\closure_{\theory}(\assess)$ holds since in the derivation tree of a sequent $\assess \sequent g$, applications of the closure rule can be lifted up to the root and joint in a single inference step.
  The other equivalence traces back to \cite{walley91}.
\end{proofof}

 \subsection{Proofs of Section \ref{sec:con}}
 
 \begin{proofof}{Proposition~\ref{prop:cohe}}
   As for $\theory$, the equivalence $\cl\posi(\assess)=\closure_{\btheory}(\assess)$ holds since in the derivation tree of a sequent $\assess \sequent g$, applications of the closure rule can be lifted up to the root and joint in a single inference step.
  
Next, for the equivalence between items (1)-(3), first of all, notice that $\posi ( \assess \cup \bnonnegative) \subseteq \cl \posi ( \assess \cup \bnonnegative) \subseteq \closure_{\btheory}(\assess)$. Hence, (3) implies (2) implies (1). For the remaining implications we reason as follows. Assume that (2) holds, and assume $f + \delta \in \posi (\assess \cup \bnonnegative)$, for every $\delta > 0$. This means $f \in \cl(\posi (\assess \cup \bnonnegative))$.  Suppose $f \in \bnegative$, then since $\bnegative$ is open, $f + \delta \in \bnegative$ for some $\delta > 0$,
contradicting P-coherence of $\posi ( \assess \cup \bnonnegative)$. We therefore conclude that $f \notin \bnegative$, and that (3) holds. 
Now, assume (2) does not hold, i.e. $f \in \bnegative$ and $f \in  \posi ( \assess \cup \bnonnegative)$. Hence, $-f$ is in the interior of $\bnonnegative$, meaning that for some $\delta > 0$, $-f - \delta = g \in \bnonnegative$. From this we get that $-1 = \frac{g + f}{\delta} \in  \posi ( \assess \cup \bnonnegative)$: (1) does not hold.
\end{proofof}

 \begin{proofof}{Proposition~\ref{prop:eqq}}
 Since (1) always implies (2), we need to verify the other direction.
 Assume $\closure_{\btheory}(\assess)$ is not P-coherent. By Proposition \ref{prop:cohe}, $-1 \in \closure_{\btheory}(\assess)$, and thus $- \epsilon \in \closure_{\btheory}(\assess)$, for every $\epsilon \geq 0$. Let $f \in \gambles_R$. If (*) holds there is $\epsilon > 0$ such that $f+\epsilon \in \bnonnegative \subseteq \closure_{\btheory}(\assess)$.  Hence, by closure under linear combinations,  $f + \epsilon + (- \epsilon) = f \in  \closure_{\btheory}(\assess)$.
 
 Assume (b) holds. It is enough to prove that $-B \subseteq \posi(B \cup \{-1\})$. Fix $b \in B$. By hypothesis $-\frac{1}{\epsilon}b - \sum_{i =1}^\ell \lambda_i b_i = -1$, with $\lambda_i\geq 0$ and $\epsilon > 0$, for some $\{b_1, \dots, b_\ell\} \subseteq B$. Hence $-b = \epsilon(-1 + \sum_{i=1}^\ell \lambda_i b_i) \in \posi(B \cup \{-1\})$.
  We thus conclude that $\closure_{\btheory}(\assess) $ is logical inconsistent.
 \end{proofof}
 
 In the next Proposition we explicit another property of $\bnonnegative$ and verify that implies (*).
 
 \begin{proposition}
 \label{prop:basis}
 Assume that $\bnonnegative$ contains a basis $B$ of $\gambles_R$ and for every $b \in B$ there is a finite $\{b_1, \dots, b_\ell\} \subset B$ such that $b + \sum_{i=1}^\ell \lambda_i b_i = \epsilon > 0$, with $\lambda_i\geq 0$.
 Then condition (*) holds
 \end{proposition}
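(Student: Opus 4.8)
The claim is that if $\bnonnegative$ contains a basis $B$ of $\gambles_R$ such that each basis element $b$ can be "cancelled" by a nonnegative combination of other basis elements down to a strictly positive constant, then the pullup property (*) holds: for every $f \in \gambles_R$ there is $\epsilon > 0$ with $f + \epsilon \in \bnonnegative$. The natural strategy is to write an arbitrary $f$ in the basis $B$, split the expansion into terms with nonnegative coefficients and terms with negative coefficients, and use the hypothesised cancellation identities to absorb the negative part, paying for it with a positive constant that we then lump into the $+\epsilon$.

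\textbf{Step 1: expand $f$ in the basis.} Write $f = \sum_{b \in B} \mu_b\, b$, a finite sum. Let $B^- = \{b \in B : \mu_b < 0\}$ be the (finite) set of basis elements appearing with negative coefficient. The terms $\sum_{b \notin B^-} \mu_b b$ are already a conic combination of elements of $B \subseteq \bnonnegative$, hence lie in $\bnonnegative$; the problem is only the finitely many terms $\mu_b b$ with $b \in B^-$.

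\textbf{Step 2: cancel each negative term.} For each $b \in B^-$, the hypothesis gives a finite set $\{b_1,\dots,b_{\ell_b}\} \subset B$ and coefficients $\lambda_i^{(b)} \ge 0$ with $b + \sum_{i=1}^{\ell_b} \lambda_i^{(b)} b_i = \epsilon_b$ for some $\epsilon_b > 0$. Multiplying by $|\mu_b| = -\mu_b > 0$ gives $\mu_b b = -|\mu_b|\,b = |\mu_b|\sum_i \lambda_i^{(b)} b_i - |\mu_b|\epsilon_b$, i.e. $\mu_b b + |\mu_b|\epsilon_b = |\mu_b|\sum_i \lambda_i^{(b)} b_i \in \bnonnegative$, since the right-hand side is a conic combination of basis elements.

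\textbf{Step 3: assemble.} Set $\epsilon \coloneqq \sum_{b \in B^-} |\mu_b|\,\epsilon_b$, which is a strictly positive real (a finite sum of positive numbers; if $B^- = \emptyset$ take any $\epsilon > 0$, which works since $f \in \posi(B) \subseteq \bnonnegative$ and $\bnonnegative$ contains the constant $\epsilon$ — note $\epsilon$ itself is a conic combination of the $b + \sum \lambda_i b_i$ identities, or one simply uses that $\gambles_R \supseteq$ constants and the cone is pull-up-closed trivially in that degenerate case; cleaner: just observe $\epsilon \in \bnonnegative$ because $\epsilon = \epsilon_b^{-1}\epsilon \cdot (b + \sum_i \lambda_i^{(b)} b_i)$ scaled appropriately is conic in $B$). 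Then
\[
f + \epsilon = \sum_{b \notin B^-} \mu_b b \;+\; \sum_{b \in B^-}\bigl(\mu_b b + |\mu_b|\epsilon_b\bigr),
\]
and every summand on the right lies in $\bnonnegative$, which is a convex cone; hence $f + \epsilon \in \bnonnegative$. This establishes (*).

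\textbf{Main obstacle.} There is no deep difficulty; the only points requiring care are bookkeeping ones: ensuring all sums are finite (guaranteed since $B$-expansions are finite and each cancellation identity is finite, so only finitely many identities are invoked), and handling the edge case $B^- = \emptyset$ so that one still produces a genuinely positive $\epsilon$ with $f + \epsilon \in \bnonnegative$ — which follows because $\bnonnegative$ contains arbitrarily small positive constants (any $\epsilon_b$ can be scaled), or, if one prefers, because adding $\epsilon$ to a conic combination of basis elements still yields a conic combination once $\epsilon$ is itself written via one of the cancellation identities.
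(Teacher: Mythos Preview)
Your proof is correct and follows essentially the same idea as the paper's. The paper is simply more terse: it first observes that it suffices to show $-b+\epsilon\in\bnonnegative$ for each basis element $b$ (leaving to the reader the basis-expansion argument you spell out in Steps~1 and~3), and then notes that the hypothesis $b+\sum_i\lambda_i b_i=\epsilon$ gives $-b+\epsilon=\sum_i\lambda_i b_i\in\bnonnegative$ directly --- which is exactly your Step~2.
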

 \begin{proof}
 It is enough to check that, for $b \in B$, there is $\epsilon > 0$ such that $-b + \epsilon \in \bnonnegative$.
 But this is immediate since by hypothesis we know there is a finite $\{b_1, \dots, b_\ell\} \subset B$ such that $b + \sum_{i=1}^\ell \lambda_i b_i = \epsilon > 0$, with $\lambda_i\geq 0$. Hence $-b + \epsilon= \sum_{i=1}^\ell \lambda_i b_i$, which is in $\bnonnegative$ since the latter is a cone that includes $B$.
 \end{proof}

    \begin{proofof}{Proposition~\ref{eq:proppullBern}}
This follows by Proposition \ref{prop:basis} and the fact that Bernstein's polynomials form a partition of unity.
  \end{proofof}
 
   \begin{proofof}{Proposition~\ref{prop:coneBern}}
Assume that a polynomial $f(\theta)$ belongs to  \eqref{eq:b0simpl0}.
If there exists a monomial of $f(\theta)$ of degree $\ell$ less than $d$, then we can multiply
it for the Bernstein partition of unity of degree $d-\ell$. The resulting polynomial
will then belong to \eqref{eq:b0simpl}.
The opposite direction of the proof is obvious.
  \end{proofof}

\end{document}